\newtheorem{proposition}{Proposition}[]
\newtheorem{remark}{Remark}[]
\newtheorem{definition}{Definition}[]
\newtheorem{problem}{Problem}[]
\definecolor{lime}{HTML}{A6CE39}
\DeclareRobustCommand{\orcidicon}{%
	\begin{tikzpicture}
	\draw[lime, fill=lime] (0,0) 
	circle [radius=0.16] 
	node[white] {{\fontfamily{qag}\selectfont \tiny ID}};
	\draw[white, fill=white] (-0.0625,0.095) 
	circle [radius=0.007];
	\end{tikzpicture}
	\hspace{-2mm}
}
\xdef\csname orcid\x\endcsname{\noexpand\href{https://orcid.org/\csname orcidauthor\x\endcsname}{\noexpand\orcidicon}}
\def\BibTeX{{\rm B\kern-.05em{\sc i\kern-.025em b}\kern-.08em
    T\kern-.1667em\lower.7ex\hbox{E}\kern-.125emX}}
\begin{document}

\title{\LARGE{Network Graph Generation through Adaptive Clustering\\ and Infection Dynamics: A Step Towards Global Connectivity}}


%




\author[$*,\#$]{Aniq Ur Rahman}
\author[$*,\#$]{Fares Fourati}
\author[$\dagger$]{Khac-Hoang Ngo}
\author[$+$]{Anish Jindal}
\author[$*$]{Mohamed-Slim Alouini}

\affil[$\#$]{ {\small{equal technical contribution}} \vspace{0.15cm}}
\affil[$*$]{King Abdullah University of Science and Technology, Thuwal, Kingdom of Saudi Arabia.}
\affil[$\dagger$]{Chalmers Univerity of Technology, Gothenburg, Sweden.}
\affil[$+$]{Univesity of Essex, Colchester, United Kingdom. \vspace{-0.85cm}}

\maketitle

\begin{abstract}
More than 40\% of the world's population is not connected to the internet, majorly due to the lack of adequate infrastructure.
Our work aims to bridge this digital divide by proposing solutions for network deployment in remote areas. Specifically, a number of access points (APs) are deployed as an interface between the users and backhaul nodes (BNs). The main challenges include designing the number and location of the APs, and connecting them to the BNs. In order to address these challenges, we first propose a metric called \emph{connectivity ratio} to assess the quality of the deployment. Next, we propose an agile search algorithm to determine the number of APs that maximizes this metric and perform clustering to find the optimal locations of the APs.
Furthermore, we propose a novel algorithm inspired by infection dynamics to connect all the deployed APs to the existing BNs economically. To support the existing terrestrial BNs, we investigate the deployment of non-terrestrial BNs, which further improves the network performance in terms of average hop count, traffic distribution, and backhaul length. Finally, we use real datasets from a remote village to test our solution.
\end{abstract}
\vspace{5pt}
\begin{IEEEkeywords}
network design, graph generation, $k$-means clustering, infection dynamics, machine learning, global connectivity
\end{IEEEkeywords}
\vspace{-0.5cm}

\section{Introduction}
\IEEEPARstart{T}{he} research and development of future communication networks has been driven towards providing faster and more reliable connection for urban and developed regions. Provisioning connectivity to remote regions has been relegated to the bottom. In 2019, about 87\% of people in developed countries were connected to the Internet, while in striking contrast only 19\% of people in the least developed countries were connected~\cite{SustainableDevelopment2021}. This means that the most vulnerable to the COVID-19 pandemic were also those do not have access to online tools to respond to the impact of the pandemic. The pandemic has thus exacerbated the lingering digital divide. This calls for a consensus to provide broadband connectivity to rural/remote regions in 6G~\cite{chaoub20216g,dang2021big}.

One of the main challenges in establishing broadband connectivity in remote areas is
the deployment of mobile backhaul
solutions. Due to high deployment costs, network operators have been reluctant to deploy fiber optics. Therefore, rural/remote backhaul relies mostly on wireless solutions, such as microwave, free-space optics (FSO), and satellite~\cite{yaacoub2020efficient}. In any case, taking both capital expenditures and long-term operational expenditures into account, only few of backhaul nodes~(BNs) would be deployed in denser areas. However, a non-negligible fraction of rural population are scattered in isolated villages with geographic barriers, such as mountains and forests, to the main BNs. Therefore, the deployment of access points~(APs) in proximity to the users for fronthaul connectivity should be carefully designed. A cost analysis of different solutions for fronthaul and backhaul connectivity in rural areas was reported in~\cite{yaacoub2020efficient}. Design and analysis of rural networks based on different solutions have been reported, such as long-range Wi-Fi~\cite{Hamid2011self}, drones~\cite{Maurilio2021coverage}, and satellites~\cite{ogutu2021techno}. 
For example, Viasat has a fleet of satellites capable of providing global coverage in the Ka-band \cite{viasat2021}. Having such satellite BNs can bring connectivity to the most remote locations on earth, thereby bridging the digital divide.
The existing literature does not provide a general algorithm to deploy frugal networks \cite{khaturia2020connecting} in any location, so as to connect its unconnected population to the internet.
Therefore, in this work, we address the problem of network deployment in rural/remote areas in a systematic manner. Given a set of few terrestrial BNs available in a sparsely populated region, we aim to design the deployment of APs to serve the scattered users. Specifically, we optimize the number and locations of the  APs and the network configuration to connect those APs to the BNs. We also explore the use of non-terrestrial BNs to further improve the connectivity performance of the network.

\begin{figure*}
    \centering
    \includegraphics[width=1.7\columnwidth]{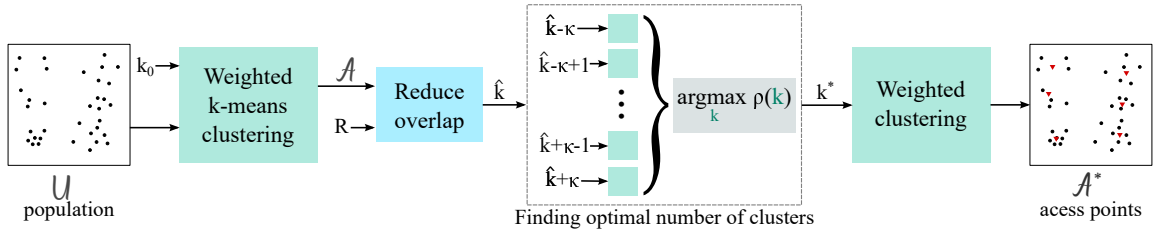}
    \caption{Generating AP locations through iterative clustering, based on the spatial distribution of population.}
    \label{fig:cluster}
\end{figure*}

\textbf{Contributions and Organization.}
The contributions of our work are summarized as follows.
\begin{itemize}
    \item We propose a metric called {\em connectivity ratio} to assess the quality of network deployment, balancing user coverage and deployment cost. This metric is used as the optimization objective to determine the optimal set of APs.
    \item We convert the connectivity ratio maximization problem to relaxed sub-problems, where we first determine the number of APs through an agile search algorithm and then perform weighted clustering to place the APs.
    \item We propose a novel algorithm inspired by infection dynamics, to economically connect all the deployed APs to the BNs.
    \item We investigate the effect of adding  non-terrestrial BNs on the network performance.
\end{itemize}

The remainder of the paper is organised as follows. In Sec.~II, we define the coverage ratio and solve the maximization of this metric. In Sec.~III, we propose a graph generation algorithm inspired by infection dynamics, and also investigate the use of non-terrestrial BNs. The paper is finally concluded in Sec.~IV with some comments on future works.

\textbf{Notation.}
We denote the set of integers from $m$ to $n$ by $[\![m,n]\!]$; ${\rm Area}(\mathcal{U})$ denotes the area of the convex hull of a set $\mathcal{U}$ in a 2D space; $\|\cdot\|$ denotes the Euclidean distance.

%

\section{Access Point Deployment} \label{sec:acces_point_deployement}
We consider a set of users $\mathcal{U}$ scattered in a two-dimensional region of interest in the presence of a small number of terrestrial BNs. To cover the users, we deploy a set of APs $\mathcal{A}$. Each AP $\mathbf{a}$ in $\mathcal{A}$ covers the users in a circular region of radius $R$, denoted by $u_{\mathbf{a}} \triangleq \mathcal{U} \cap \mathcal{B}\left( \mathbf{a}, R \right)$, where $\mathcal{B}(\mathbf{x}, l)$ denotes a circle of radius $l$ centred at $\mathbf{x}$. The number of users covered by at least one AP in $\mathcal{A}$ is denoted by  $\mathcal{C}_{\mathcal{A}} \triangleq \bigcup_{\mathbf{a}\in \mathcal{A}} u_\mathbf{a}$. Note that due to the limited range of the APs, not all the users are guaranteed to be covered, i.e., $|\mathcal{C}_\mathcal{A}| \le |\mathcal{U}|$.

To effectively deploy the APs, we need to determine the number of APs and their positions. On the one hand, the number of APs needs to be sufficiently large so that $\mathcal{A}$ can collectively cover the region. On the other hand, an excessive number of APs increases the deployment cost. This calls for a design metric that balances between user coverage and deployment cost, which remains unclear in the literature. To this end, we propose a metric called \textit{connectivity ratio}. 


\begin{definition}
The connectivity ratio $\rho(\mathcal{A})$ associated with the set of APs $\mathcal{A}$ is defined as:
\begin{align} \label{eq:rho}
    \rho(\mathcal{A}) \triangleq \frac{\left| \mathcal{C}_{\mathcal{A}} \right|^2}{|\mathcal{A}||\mathcal{U}|}.
\end{align}
\end{definition}
The connectivity ratio is the product of two important metrics: (i) the average number of users per AP  $\frac{|\mathcal{C}_{\mathcal{A}}|}{|\mathcal{A}|}$, and (ii) the coverage ratio $\frac{|\mathcal{C}_{\mathcal{A}}|}{|\mathcal{U}|}$. A deployment with large connectivity provides connectivity to a majority of the users while minimizing the number of APs, as interpreted in the following remark. 
\begin{remark}
Consider two deployments $\mathcal{A}_1$ and $\mathcal{A}_2$ with $\rho(\mathcal{A}_1) > \rho(\mathcal{A}_2)$. If the two deployments use the same number of APs, i.e., $|\mathcal{A}_1| = |\mathcal{A}_2|$, then $\mathcal{A}_1$ covers a larger number of users, i.e., $|\mathcal{C}_{\mathcal{A}_1}| > |\mathcal{C}_{\mathcal{A}_2}|$. If they covers the same number of users, i.e., $|\mathcal{C}_{\mathcal{A}_1}| = |\mathcal{C}_{\mathcal{A}_2}|$, then $\mathcal{A}_1$ uses a smaller number of APs, i.e., $|\mathcal{A}_1| < |\mathcal{A}_2|$, thus saves the deployment cost.
\end{remark}
Therefore, to balance between maximizing coverage and minimizing the deployment cost, we maximize the connectivity ratio.

\begin{problem}\label{prob:ratio}
Generate a set of AP locations $\mathcal{A}^*$ such that the connectivity ratio $\rho(\mathcal{A}^*)$ is maximized, i.e.,
\begin{align}
    \mathcal{A}^* = \arg \max_{\mathcal{A}} \rho(\mathcal{A})
    \label{eq:p1}
\end{align}
\end{problem}
For a fixed number of APs $k = |\mathcal{A}|$, one can optimize the positions of the APs by clustering \cite{inaba1994applications,arthur2006slow,scikit-learn} the set of all users $\mathcal{U}$ into $k$ clusters and place an AP at the centroid of each cluster. We denote the set of APs generated from this clustering by $\mathcal{A} = \psi(k)$. Nevertheless, in our setup, $k$ is unknown \emph{a priori} and also needs to be optimized. 
In order to simplify Problem~\ref{prob:ratio} while exploiting existing clustering algorithms, we decouple the optimization of $k$ and of the positions of the APs as follows. First, we optimize the number of clusters $k$ as
\begin{align} \label{eq:opt_k}
    k^* = \arg \max_{k \in [\![ 1, |\mathcal{U}|]\!]} \rho(\psi(k)).
\end{align}
Then, the set of clusters is generated as $\mathcal{A}^* = \psi(k^*)$. 

\begin{remark}
Solving the optimization of $k$ in~\eqref{eq:opt_k} through exhaustive search has a worst-case complexity \cite{inaba1994applications} of: 
$\mathcal{O}\left( \sum_{k=1}^{|\mathcal{U}|} k |\mathcal{U}|^{k + 1} \right),$ which is computationally expensive, especially when the number of users $|\mathcal{U}|$ is large. Therefore, we aim to reduce the search space.
\end{remark}
With a slight abuse of notation, hereafter we write $\rho(\psi(k))$ simply as $\rho(k)$ for convenience. When $k$ increases from a small value, the connectivity ratio $\rho(k)$ increases since for small number of clusters, each added AP helps covering more users. Specifically, in this regime, $|\mathcal{C}_{\mathcal{A}}|^2$ increases faster than $k$, thus it follows from~\eqref{eq:rho} that $\rho(k)$ increases. However, for large values of $k$, the coverage zones of the APs begin to overlap and cover the same population. Once the majority of the users have been connected, adding more APs increases the denominator of $\rho(k)$ while the numerator $|\mathcal{C}_{\mathcal{A}}|^2$ remains approximately the same. This suggests that $\rho(k)$ decreases after a certain value of $k$. This is made precise in the following proposition, where we invoke the definition of covering in Appendix~\ref{app:covering_packing}.

\begin{proposition} \label{prop:kmax}
Let $k_{\max}$ be the $R$-covering number of $\mathcal{U}$, i.e., $k_{\max} = N(R,\mathcal{U})$ (see Definition~\ref{def:covering}). Then $\rho(k)$ is a decreasing function of $k$ for $k \geq k_{\max}$. Furthermore, $k_{\max}$ is bounded as
\begin{align} \label{eq:bound_kmax}
        \frac{{\rm Area}(\mathcal{U})}{\pi R^2} \le k_{\max} \le \frac{4{\rm Area}(\mathcal{U}_{R/2})}{\pi R^2},
\end{align}
where $\mathcal{U}_{R/2}$ denotes the union of the circles of radius $R/2$, each centered at a point in $\mathcal{U}$.
\end{proposition}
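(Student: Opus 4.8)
The plan is to prove the two claims of Proposition~\ref{prop:kmax} separately: first the monotonicity of $\rho(k)$ for $k \ge k_{\max}$, and then the sandwich bound~\eqref{eq:bound_kmax}.

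For the monotonicity, the key observation is that once $k \ge k_{\max} = N(R,\mathcal{U})$, the APs placed by the clustering routine $\psi(k)$ can collectively cover \emph{all} of $\mathcal{U}$, since by definition of the $R$-covering number there exist $k_{\max}$ balls of radius $R$ whose union contains $\mathcal{U}$. Thus $|\mathcal{C}_{\psi(k)}| = |\mathcal{U}|$ for every $k \ge k_{\max}$, assuming the clustering achieves full coverage whenever it is feasible (this is the point I would want to state carefully as a modeling assumption, or argue from the structure of $\psi$). Given that, $\rho(k) = |\mathcal{U}|^2 / (k |\mathcal{U}|) = |\mathcal{U}| / k$ on the range $k \ge k_{\max}$, which is manifestly strictly decreasing in $k$. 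This is the short part.

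For the bound~\eqref{eq:bound_kmax}, I would use the standard volumetric covering/packing arguments, invoking the covering and packing definitions from Appendix~\ref{app:covering_packing}. For the \emph{lower} bound: any collection of $k_{\max}$ disks of radius $R$ that covers $\mathcal{U}$ must have total area at least ${\rm Area}(\mathcal{U})$, and since each disk has area $\pi R^2$, we get $k_{\max} \pi R^2 \ge {\rm Area}(\mathcal{U})$, i.e., $k_{\max} \ge {\rm Area}(\mathcal{U})/(\pi R^2)$. For the \emph{upper} bound, I would relate covering to packing: take a maximal $R$-packing of $\mathcal{U}$ (a maximal set of points in $\mathcal{U}$ pairwise at distance $\ge R$); by maximality the $R$-balls around these points cover $\mathcal{U}$, so $k_{\max}$ is at most the packing number $M(R,\mathcal{U})$. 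The packing points, being pairwise at distance $\ge R$, have disjoint open disks of radius $R/2$ around them; each such disk is contained in $\mathcal{U}_{R/2}$ (the $R/2$-neighborhood of $\mathcal{U}$), so their total area $M(R,\mathcal{U}) \cdot \pi (R/2)^2 \le {\rm Area}(\mathcal{U}_{R/2})$, giving $k_{\max} \le M(R,\mathcal{U}) \le 4\,{\rm Area}(\mathcal{U}_{R/2})/(\pi R^2)$.

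The main obstacle is not the geometry — those volumetric bounds are routine — but rather making the first claim rigorous: namely justifying that $|\mathcal{C}_{\psi(k)}| = |\mathcal{U}|$ for all $k \ge k_{\max}$. The clustering map $\psi$ places APs at centroids of $k$-means clusters, and it is not a priori obvious that $k$-means centroids with $k = k_{\max}$ clusters achieve an $R$-cover even though one exists; one typically needs either to treat this as the intended operational regime, to appeal to the fact that adding more clusters cannot decrease coverage so it suffices to handle the exact threshold, or to slightly adjust the statement so that $\psi$ is taken to be any clustering that covers whenever covering with $k$ centers is feasible. I would handle this by making that coverage property explicit (and noting that $|\mathcal{C}_{\psi(k)}|$ is nondecreasing in $k$ in the relevant sense), after which the rest of the argument goes through cleanly.
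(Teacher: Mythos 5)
Your proposal is correct and follows essentially the same route as the paper: for the monotonicity you argue exactly as the paper does (coverage saturates once $k \ge k_{\max}$ and is non-decreasing in $k$, so $\rho(k) = |\mathcal{U}|/k$ there), and you are in fact more explicit than the paper about the implicit assumption that the clustering output $\psi(k_{\max})$ actually attains the full coverage whose existence the covering number guarantees. For the bound~\eqref{eq:bound_kmax} the paper simply invokes Proposition~\ref{prop:covering_number} (which cites standard covering/packing results), whereas you re-derive the same volumetric lower bound and the packing-based upper bound directly; the substance is identical.
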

\begin{proof}
By definition of covering, $k_{\max}$ is the smallest number of APs for which all users are covered, i.e., $|\mathcal{C}_{\psi(k_{\max})}| = |\mathcal{U}|$. Since $|\mathcal{C}_{\psi(k)}|$ is non-decreasing in $k$, it holds that $|\mathcal{C}_{\psi(k)}| = |\mathcal{U}|$, $\forall k > k_{\max}$.
Note that $k_{\max}$ is guaranteed to be finite as $k_{\max} \leq |\mathcal{U}|$. It follows that for $k>k_{\max}$, the connectivity ratio is given by $\rho(k) = \frac{|\mathcal{U}|}{k}$, which is obviously a decreasing function of $k$. The bound \eqref{eq:bound_kmax} follows directly from Proposition~\ref{prop:covering_number}.
\end{proof}

It follows from Proposition~\ref{prop:kmax} that $\rho(k) \le \rho(k_{\max})$, $\forall k \in [\![ k_{\max},|\mathcal{U}|]\!]$. Therefore, the search space in~\eqref{eq:opt_k} can be reduced without loss of optimality to $[\![1,k_{\max}]\!]$ i.e.,  
\begin{align}
    k^* = \arg \max_{k \in [\![ 1, |\mathcal{U}|]\!]} \rho(k) 
    = \arg \max_{k \in [\![1, k_{\max}]\!]}   \rho(k) .
\end{align}
This $k^*$ is guaranteed to exist as the search space is discrete and bounded.

Although the search space has been reduced, it remains big for large $k_{\max}$. Specifically, we see from~\eqref{eq:bound_kmax} that $k_{\max}$ is large when $R$ is small and when ${\rm Area}(\mathcal{U})$ is large, i.e., the set of users is scattered in a large region, which is the case in remote/rural areas. To further reduce the space, we propose a heuristic method to estimate a value $\hat{k}$ in proximity to the optimal value $k^*$ and then search in the neighborhood of $\hat{k}$. Specifically, it follows from Proposition~\ref{prop:covering_number} that $k_{\max}$ is lower-bounded by $P(2R,\mathcal{U})$, which is the largest number of APs such that the circles with radius $R$ centered at these APs do not overlap. Since $\rho(k)$ starts decreasing when the overlap between the clusters becomes significant, we predict that $\rho(k)$ is maximized near $P(2R,\mathcal{U})$, i.e., $k^* \in [\![P(2R,\mathcal{U}) - \kappa, P(2R,\mathcal{U}) + \kappa]\!]$ for sufficiently large $\kappa$. Therefore, we first estimate $P(2R,\mathcal{U})$ and then search for $k^*$ in the neighborhood of the estimate. To estimate $P(2R,\mathcal{U})$, we start from a value $k_0$ larger than $P(2R,\mathcal{U})$ (e.g., $k_0 = \frac{4{\rm Area}(\mathcal{U}_{R/2})}{\pi R^2}$), partition the population into $k_0$ clusters, and then progressively remove the APs whose radius-$R$ circle intersects with other APs' circles. In this way, we expect to form a dense $2R$-packing of $\mathcal{U}$ and thus the resulting number of APs $\hat{k}$ closely approaches the packing number $P(2R,\mathcal{U})$. Then, we find $k^*$ using an exhaustive search the extensively reduced search space $[\![\hat{k} - \kappa, \hat{k} + \kappa]\!]$. Finally, we perform clustering with $k^*$ clusters to determine the positions of the APs. The proposed method is presented in Algorithm~\ref{algo:aploc} and illustrated in Fig.~\ref{fig:cluster} and Fig.~\ref{fig:ksearch}.


\begin{figure}[h!]
    \centering
    \includegraphics[width=0.9\columnwidth]{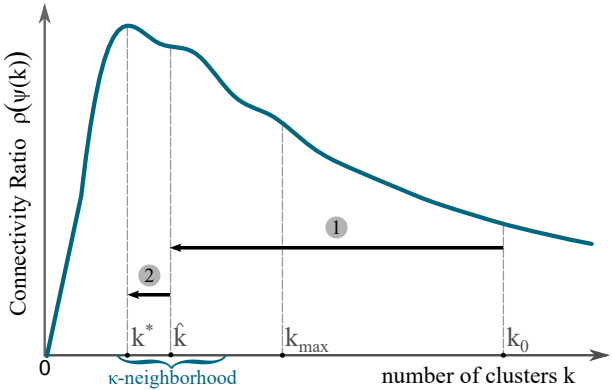}
    \caption{Illustration of the variation of the connectivity ratio $\rho(k)$ and the search of $k^*$.}
    \label{fig:ksearch}
\end{figure}

\begin{algorithm} 
\DontPrintSemicolon
\caption{Learning APs Positions}
\label{algo:aploc}
\KwInput{the population $\mathcal{U}$, radius $R$, and initial guess $k_{0}$}
\KwOutput{set of APs $\mathcal{A}^*$}  
$\mathcal{A} \gets \texttt{clustering}(k_0, \mathcal{U})$  \Comment{Weighted clustering}\;
$N \gets 0$ \Comment{Count of overlapping APs}\;
$\mathcal{A}' = \mathcal{A}$\; 
\For{$\mathbf{a}$ in $\mathcal{A}$}
{
    $\mathcal{A}' = \mathcal{A}' \setminus \{ \mathbf{a} \}$ \;
        \If{$\exists~ \mathbf{a}' \in \mathcal{A}'$ such that $\lVert \mathbf{a} - \mathbf{a}' \rVert \leq 2R $}
        {
            $N \gets N+1$ \Comment{APs overlap} \;
        }
}
$ \hat{k} \gets k_0 - N+1$ \Comment{Refining k}\;
$k^* \gets \arg \displaystyle\max_{ k \in [\![\hat{k} - \kappa, \hat{k} + \kappa]\!] } \rho(k) $ \Comment{Exhaustive search}\;
$\mathcal{A}^* \gets \texttt{clustering}(k^*, \mathcal{U})$ \Comment{Weighted clustering}\;
\end{algorithm}

We next demonstrate our algorithm using a real dataset of the population of Kilimambogo, Kenya~\cite{facebook2021}, which has one of the lowest gross domestic product (GDP) per capita in the world.
The connectivity ratio for various coverage radius $R$ is shown in Fig.~\ref{fig:trend3}. 
Moreover, in Fig.~\ref{fig:map1}, we show the optimal location of the APs and the spatial distribution of the population for $R=750$~m in an area of roughly 400 km$^2$.

\begin{figure}[h!]
    \centering
    \includegraphics[width=\columnwidth]{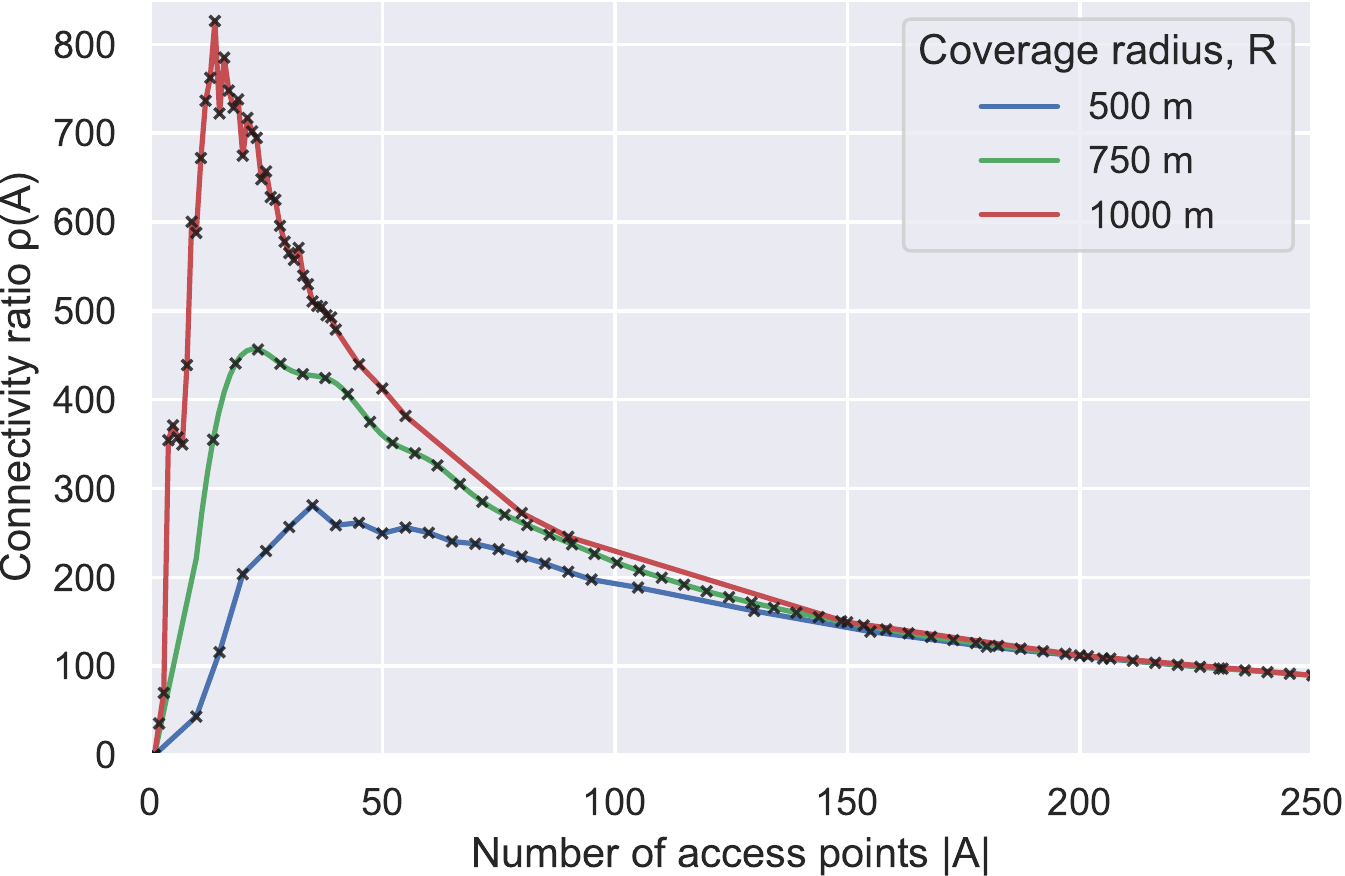}
    \caption{Connectivity ratio $\rho(\mathcal{A})$ as a function of the number of APs for Kilimambogo, Kenya.}
    \label{fig:trend3}
\end{figure}

\begin{figure}[h!]
    \centering
    \includegraphics[width=0.9\columnwidth]{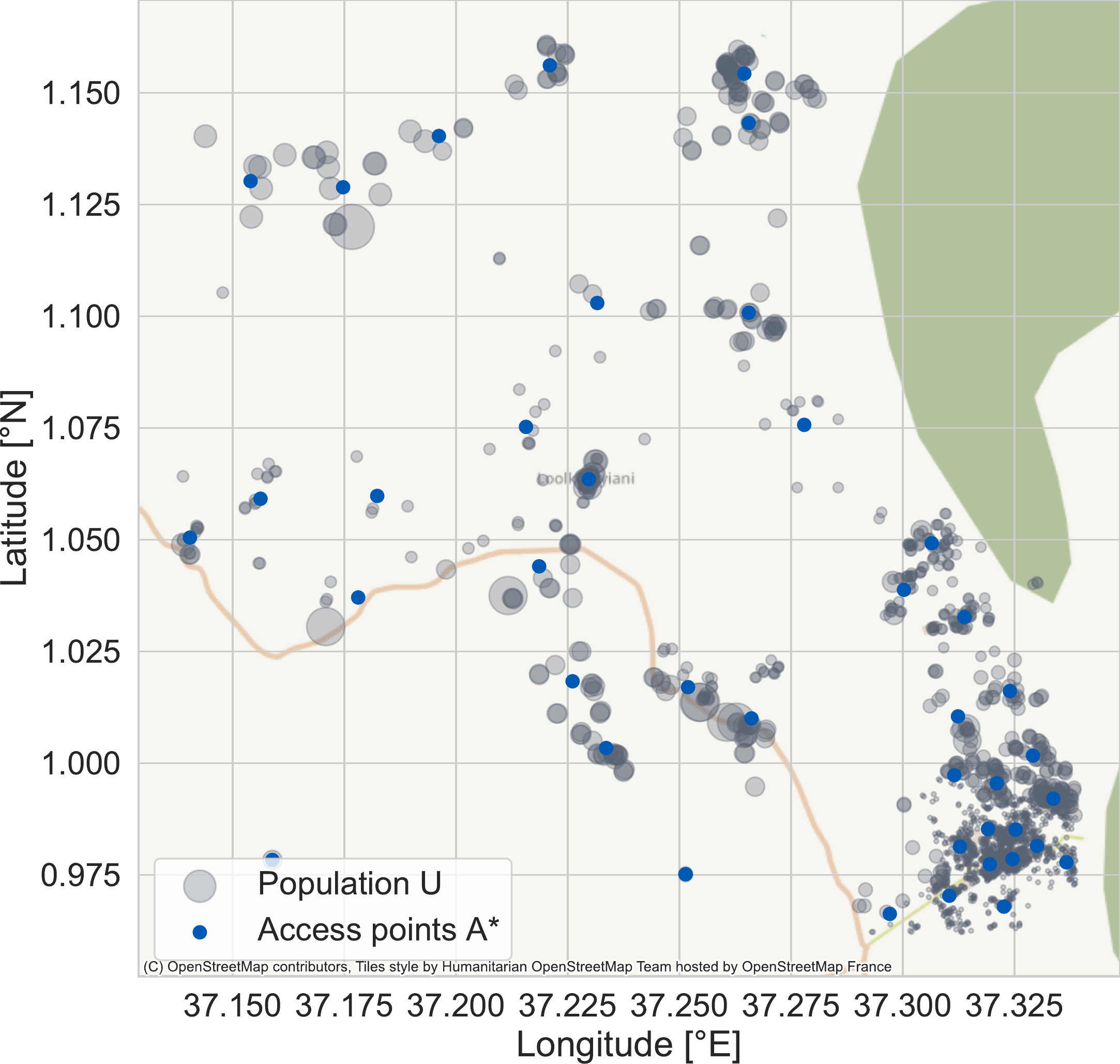}
    \caption{Optimal AP locations that maximizes $\rho(\mathcal{A})$, for a region in Kilimambogo, Kenya ($R=750$~m) for $|\mathcal{A^*}|=39$.}
    \label{fig:map1}
\end{figure}

\begin{figure*}[t!]
    \centering
    \includegraphics[width=1.8\columnwidth]{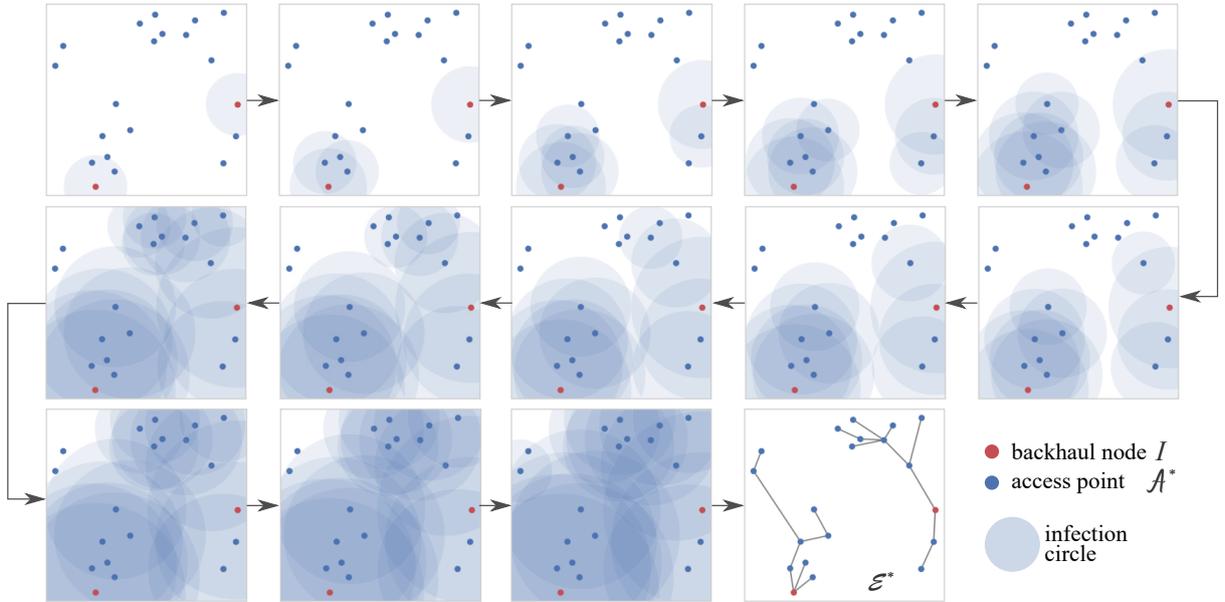}
    \caption{{Infection algorithm in action:} generating the network topology.}
    \label{fig:zombie_sample}
\end{figure*}

\newpage
\section{Backhaul Link Generation} \label{sec:network_generation}
Through clustering, we have obtained the set of APs $\mathcal{A}^*$, and we also have available with us, the set of terrestrial BNs $\mathcal{I}$.\footnote{The location of the existing BNs, i.e., cell towers is obtained through Open Cell ID~\cite{opencellid2021}.} Now, our focus is to provide backhaul to all the APs while optimally using the backhaul resources. We want every AP to be connected to one of the BNs, such that the total length of the backhaul resource we utilize is minimized. Formally, we frame the problem as follows.
\begin{problem} 
Generate a graph $\mathfrak{G}(\mathcal{A}^* \cup \mathcal{I}, \mathcal{E}^*)$ such that all the points in $\mathcal{A}^*$ are directly or indirectly connected to one of the points in $\mathcal{I}$ and the sum of the edges' length is minimized:
\begin{align}
    \mathcal{E}^* &= \arg \min_{\mathcal{E}} \sum_{ \mathbf{e} \in \mathcal{E}} \lVert \mathbf{e} \rVert \nonumber \\
    &\text{such that} \quad \forall \mathbf{a} \in \mathcal{A}^*, \exists  \mathbf{a}' \in \mathcal{I} : \mathbf{a} \leftrightarrow \mathbf{a}',
    \label{eq:p2}
\end{align}
where $\mathbf{a} \leftrightarrow \mathbf{a}'$ denotes that the vertices $\mathbf{a}$ and $\mathbf{a}'$ are directly or indirectly connected, and $\lVert \mathbf{e} \rVert$ is the length of the edge $\mathbf{e}$.
\label{prob:graph}
\end{problem}

\begin{remark}
The exact solution of Problem~\ref{prob:graph} obtained through combinatorics has a complexity of $\mathcal{O}\left( 2^{ \binom{|\mathcal{A}^* \cup \mathcal{I}|}{2}}  \right)$, which makes it computationally infeasible for large networks.
\end{remark}
To reduce complexity, we propose in the next subsection an algorithm to approximately solve Problem~\ref{prob:graph}.

\subsection{Infection Algorithm}
Our proposed algorithm is inspired by the concept of \textit{infection dynamics} \cite{liu2003propagation, may2001infection}. The vertices belonging to $\mathcal{I}$ (infected) compete among themselves to infect the vertices in $\mathcal{A}^*$ by sweeping out a circle whose radius increases non-linearly with time. 
Once the circle touches a vertex in $\mathcal{A}^*$, this vertex also gets infected and begins competing with the other infected vertices. This process is illustrated in Fig.~\ref{fig:zombie_sample}. 
Mathematically, we model the growth of the radius of an infected vertex $\mathbf{i}$ at time $t$ as
\begin{align}
    \dot r_{\mathbf{i}}(t) = \left ( \alpha + \frac{\beta}{1 + \gamma \, r_{\mathbf{i}}^2(t)} \right) \mathbbm{1}\{t > t_0^{\mathbf{i}}\},
\end{align}
where $\alpha, \beta$ are hyper parameters, $t_0^{\mathbf{i}}$ denotes the time of infection of vertex $\mathbf{i}$, and $\mathbbm{1}\{\cdot\}$ denotes the indicator function.
Instead of running the algorithm over continuous time, we discretize it into time-steps of size $\delta$ each. In this way, at each time-step $j$, we update the radius $r_{\mathbf{i}}[j]$ and speed $s_{\mathbf{i}}[j]$ of the infected vertex  $\mathbf{i}$ as
\begin{align}
    r_{\mathbf{i}}[j] &= \left( r_{\mathbf{i}}[j-1] + \delta \cdot s_{\mathbf{i}}[j-1]\right)\mathbbm{1}\{j > j_0^{\mathbf{i}}\},
    \label{eq:radius}
\end{align}
\begin{align}
    s_{\mathbf{i}}[j] &= \left( \alpha + \frac{\beta}{1 + \gamma \, r_{\mathbf{i}}^2[j]} \right)\mathbbm{1}\{j > j_0^{\mathbf{i}}\},
    \label{eq:speed}
\end{align}
where $j_0^{\mathbf{i}} = \Big\lceil \frac{t_0^{\mathbf{i}}}{\delta} \Big\rceil$. Initially $j_0^{\mathbf{i}} = 0, \forall \mathbf{i} \in \mathcal{I}$. The vertex connects to its infector, in other words, it connects to the vertex whose circle touches it first. The algorithm terminates once all the vertices are infected and the resulting graph gives the network topology. 
\begin{figure}[h!]%
 \centering
 \subfloat[Speed of infection.
 ]{\includegraphics[width=0.47\linewidth]{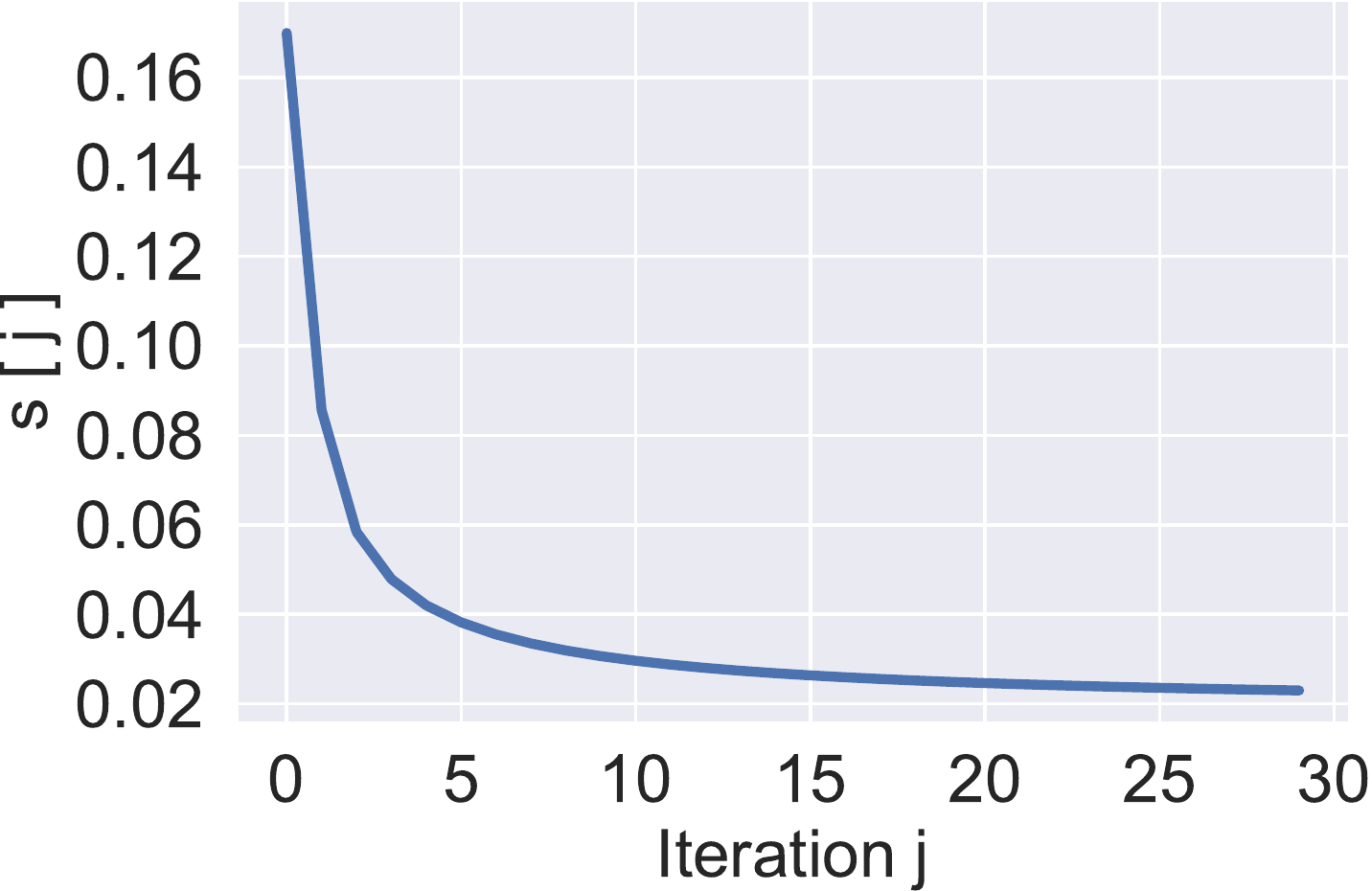}\label{fig:inf_speed}} \hspace{1pt}
 \subfloat[Radius of infection circle.
 ]{\includegraphics[width=0.47\linewidth]{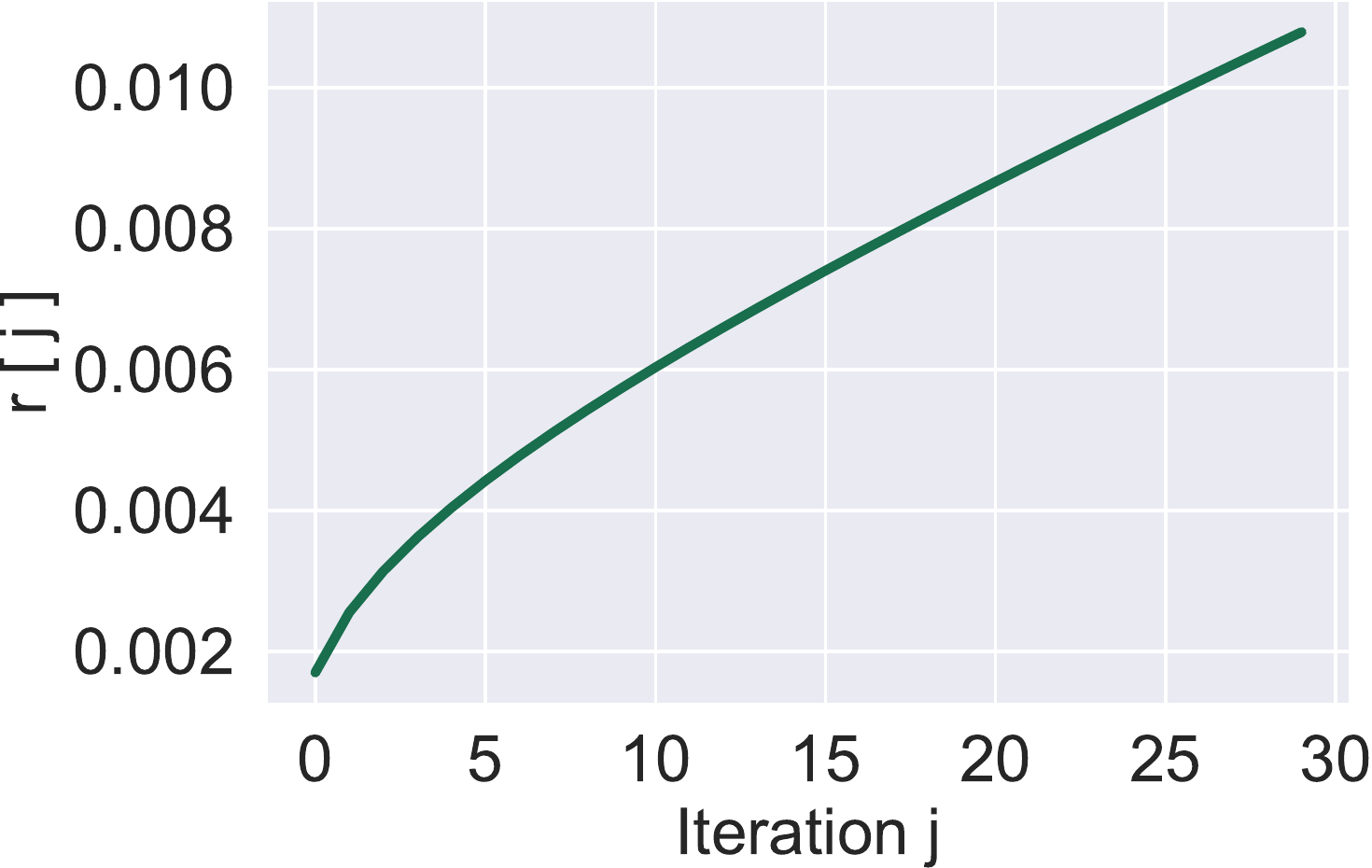}\label{fig:inf_radius}}
 \caption{Exemplary infection dynamics for $\alpha= 0.02$, $\beta=0.15$, $\gamma = 4.4 \times 10^{5}$, and $\delta=0.01$.}%
 \label{fig:infection}%
\end{figure}

\begin{figure*}[h!]%
 \centering
 \subfloat[$|\mathcal{I}_{\text{NTB}}|= 0.$
 ]{\includegraphics[width=0.18\linewidth]{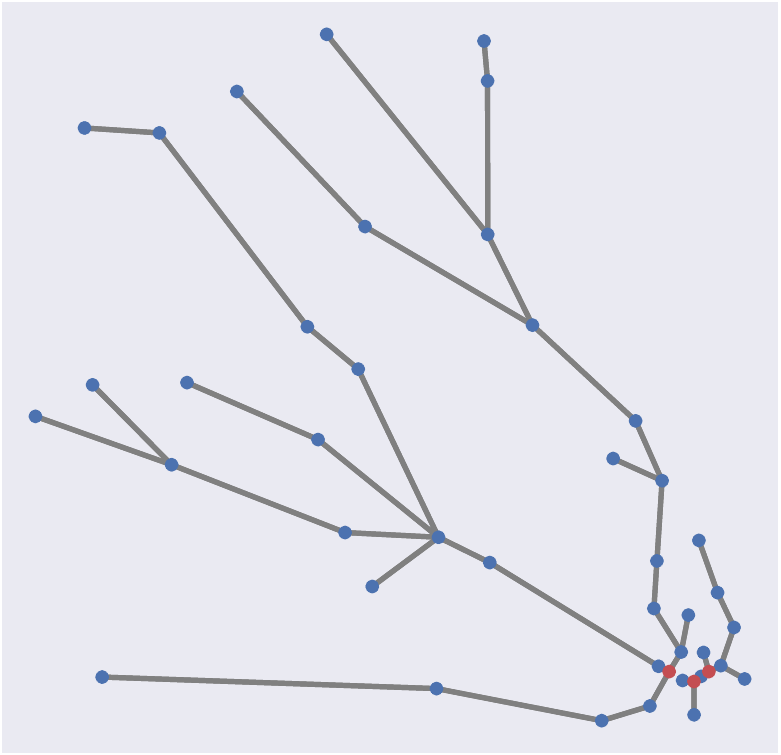}\label{fig:n0}} \hspace{2pt}
 \subfloat[$|\mathcal{I}_{\text{NTB}}|= 1.$
 ]{\includegraphics[width=0.18\linewidth]{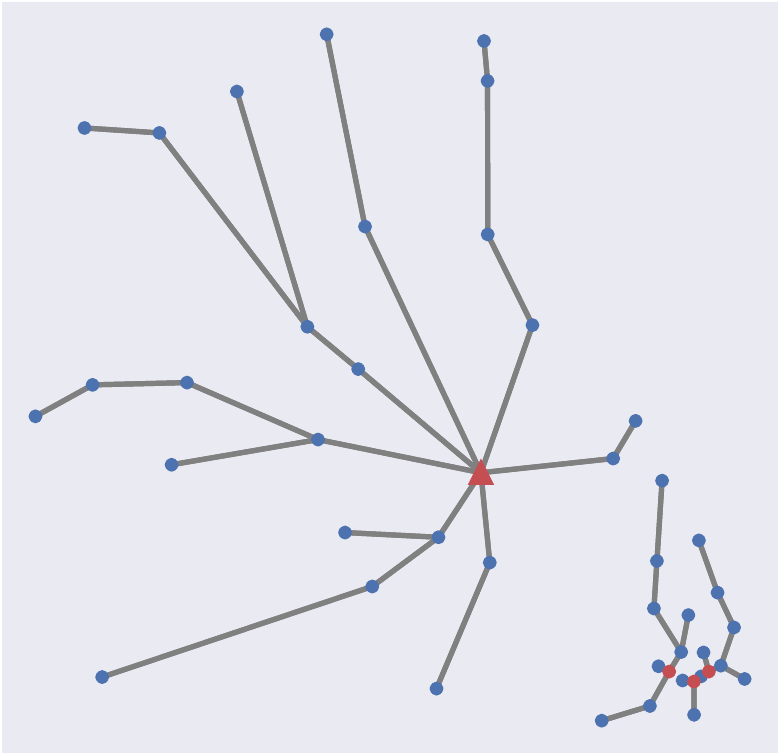}\label{fig:n1}} \hspace{2pt}
 \subfloat[$|\mathcal{I}_{\text{NTB}}|= 2.$
 ]{\includegraphics[width=0.18\linewidth]{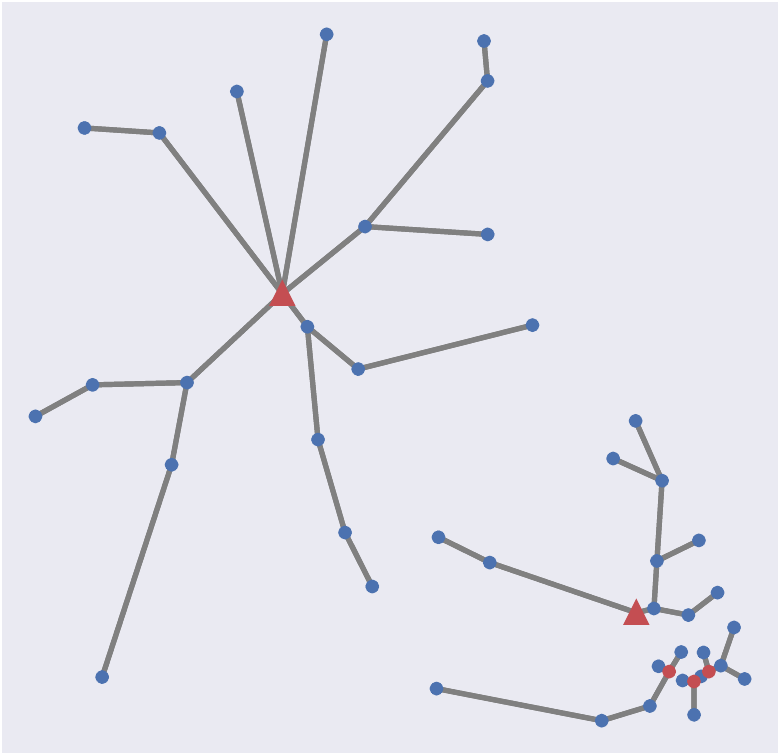}\label{fig:n2}} \hspace{2pt}
 \subfloat[$|\mathcal{I}_{\text{NTB}}|= 3.$
 ]{\includegraphics[width=0.18\linewidth]{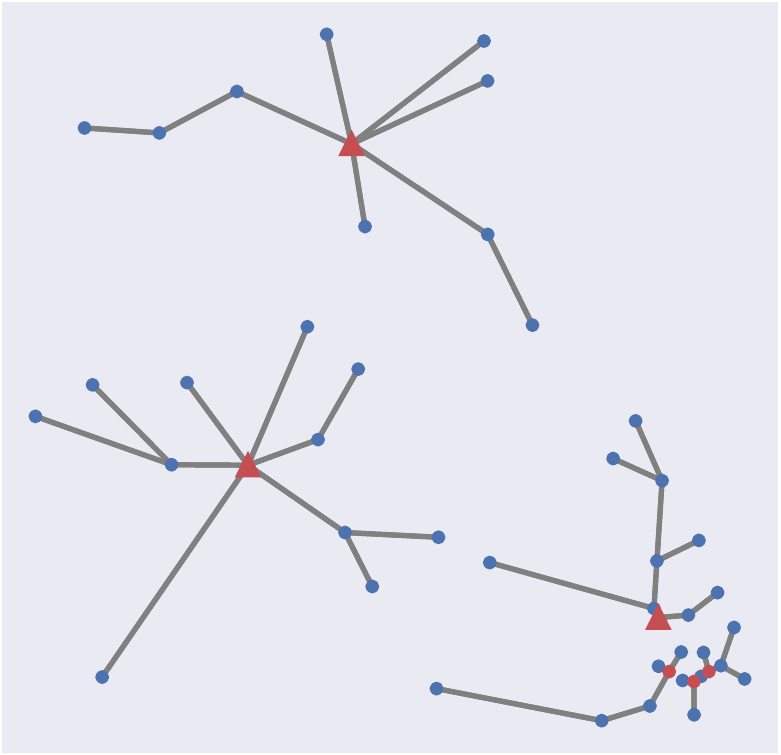}\label{fig:n3}} \hspace{2pt}
 \subfloat[$|\mathcal{I}_{\text{NTB}}|= 4.$
 ]{\includegraphics[width=0.18\linewidth]{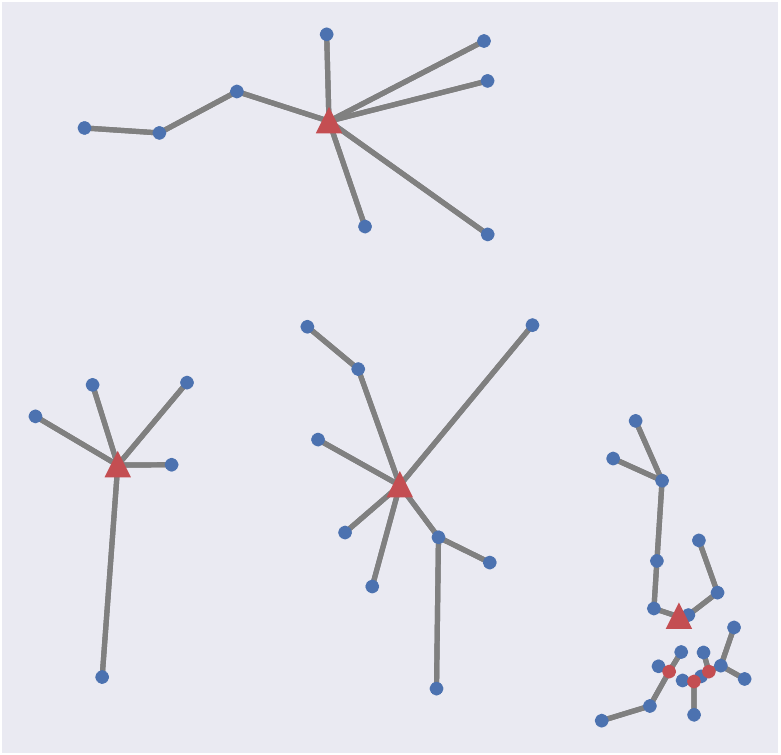}\label{fig:n4}} 
 \caption{Sample network generated as number of non-terrestrial BNs varies for $\alpha= 0.01$, $\beta=0.15$, $\gamma = 4.4\times10^{5}$, and $\delta=0.01$. The blue dots represent APs, red dots represent terrestrial BNs and red triangles represent non-terrestrial BNs. }%
 \label{fig:topology}%
\end{figure*}

The algorithm is presented in Algorithm~\ref{alg:zombiealgo}. The change in the infection speed and infection radius with iterations is shown in Fig.~\ref{fig:infection}. The infection radius increases quite rapidly in the beginning since the infected node wants to be faster than its infector in capturing the neighboring nodes.

\begin{algorithm}
\DontPrintSemicolon
\caption{Infection Algorithm}\label{alg:zombiealgo}
\KwInput{the population $\mathcal{U}$, the set of APs $\mathcal{A}^*$}
\KwOutput{the edges $\mathcal{E}^*$} 

$\mathcal{E}^* = \varnothing$ \Comment{Initially the  vertices are unconnected}\;
$j \gets 0$  \Comment{Initialize time-step to zero}\; 
$\mathcal{I}'_{j} = \mathcal{I}$ \Comment{Initialize the set of infected vertices} \;
\While{$|\mathcal{I}_j'| < |\mathcal{I}\cup \mathcal{A}^*|$}
{
    \For{$\mathbf{i} \in \mathcal{I}'_{j}$ }
    {
    Update $r_{\mathbf{i}}[j]$ according to \eqref{eq:radius} \;
        \For{$\mathbf{a} \in \mathcal{A}^* \setminus \mathcal{I}'_{j}$}
        {
            \If{$\lVert \mathbf{i} - \mathbf{a} \rVert < r_{\mathbf{i}}[j]$}
            {
                \If{$\mathbf{a} \notin \mathcal{I}'_{j+1}$}
                {
                    $\mathcal{E}^* \gets \mathcal{E}^* \cup \{ (\mathbf{i}, \mathbf{a}) \}$ \Comment{Generate edge} \;
                    $\mathcal{I}'_{j+1} = \mathcal{I}'_{j} \cup \{ \mathbf{a}\}$ \Comment{Infected} \;
                }
            }
        }
    Update $s_{\mathbf{i}}[j]$ according to \eqref{eq:speed}\;
    } 
    $j \gets j + 1$ \Comment{Increment time-step}\; 
}
\end{algorithm}


%


\subsection{Adding Non-Terrestrial Backhaul Nodes} \label{sec:non-terrestrial}
Relying entirely on the fixed terrestrial BNs may not be optimal. These terrestrial BNs are typically deployed in densely populated area, as is the case for the region in Kilimambogo, Kenya. Therefore, the APs in the sparse locations are connected indirectly to the BNs through many hops. To further improve the network configuration, we suggest the deployment of non-terrestrial BNs which receive backhaul from satellites or high altitude platforms. The position of these non-terrestrial BNs can be dynamically changed, which increases the chance to reliably connect the remote APs. We generate the graph using the infection algorithm by adding the non-terrestrial BNs to the initially infected vertices. The resulting graph for the Kilimambogo region as the number of non-terrestrial BNs increases is shown in Fig.~\ref{fig:topology}. 

With Network X library~\cite{networkx}, we perform analysis on the resulting networks and show the improvement in the network design in Fig.~\ref{fig:netx}. 
In Fig.~\ref{fig:netx}(a), we see that the average hop count decreases as we introduce more non-terrestrial BNs. Similarly, in Fig.~\ref{fig:netx}(b), the number of APs supported by each BN decreases. Adding more non-terrestrial BNs also makes the AP distribution per BN more fair, as seen in Fig.~\ref{fig:netx}(c). Moreover, this also lowers the use of backhaul links, and we see the reduction in the total backhaul length in Fig.~\ref{fig:netx}(d).
These results suggest that the addition of non-terrestrial BNs significantly improve the deployment of a realistic network.

\begin{figure}[h!]%
 \centering
 \subfloat[Average hop count.
 ]{\includegraphics[width=0.47\linewidth]{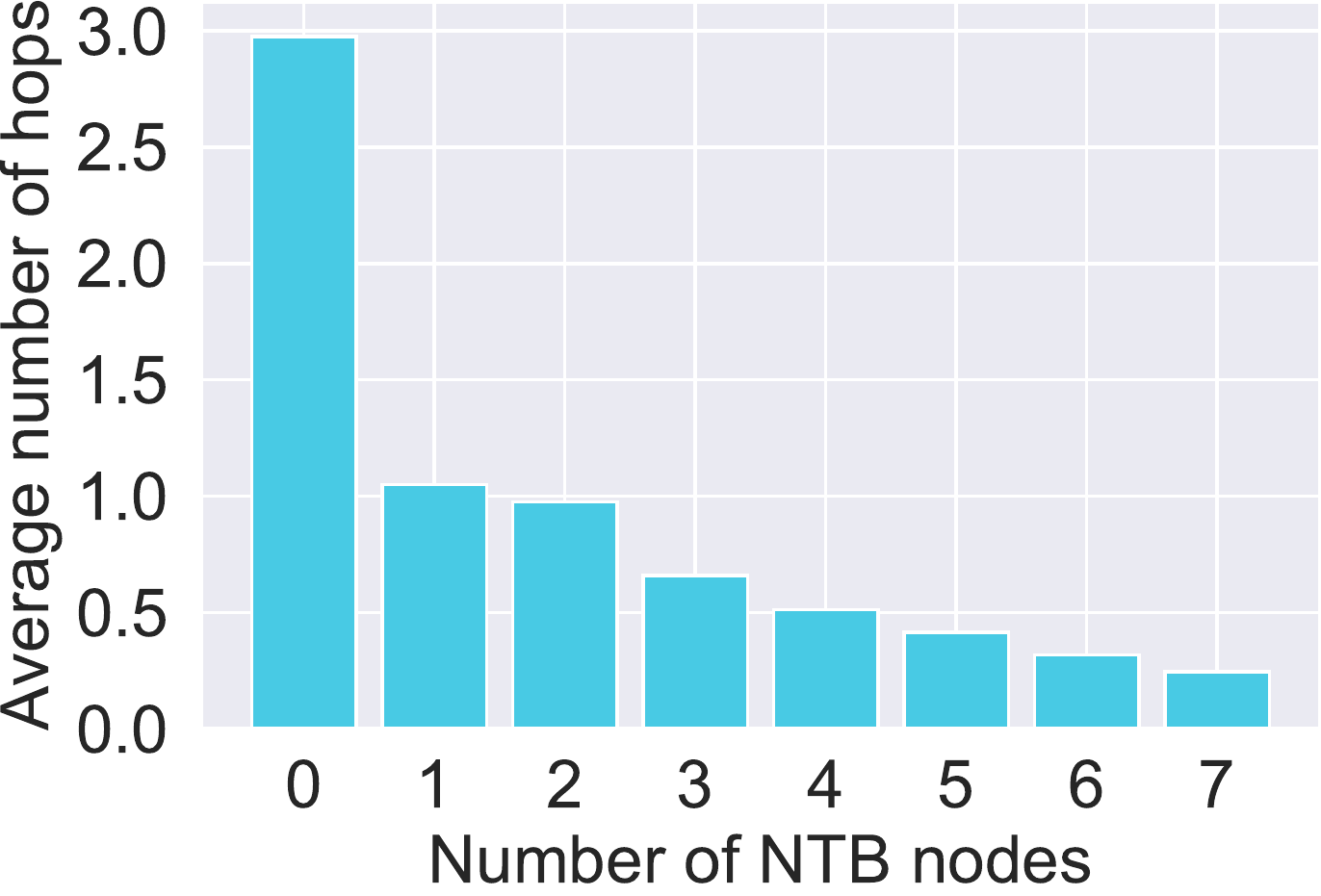}\label{fig:hops}} \hspace{2pt}
 \subfloat[AP count per BN.
 ]{\includegraphics[width=0.47\linewidth]{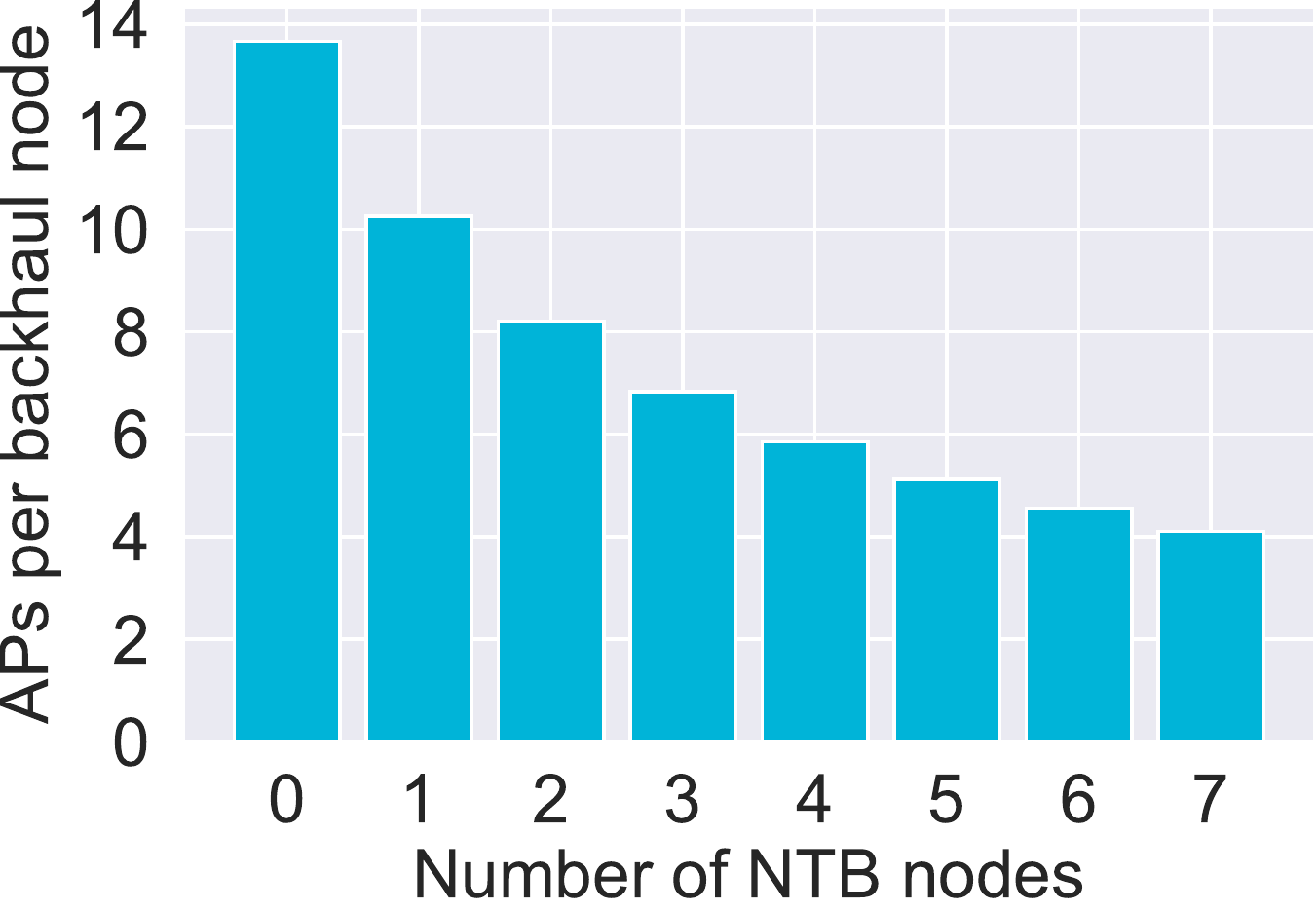}\label{fig:load}}\\
 \subfloat[Fairness in AP distribution.
 ]{\includegraphics[width=0.47\linewidth]{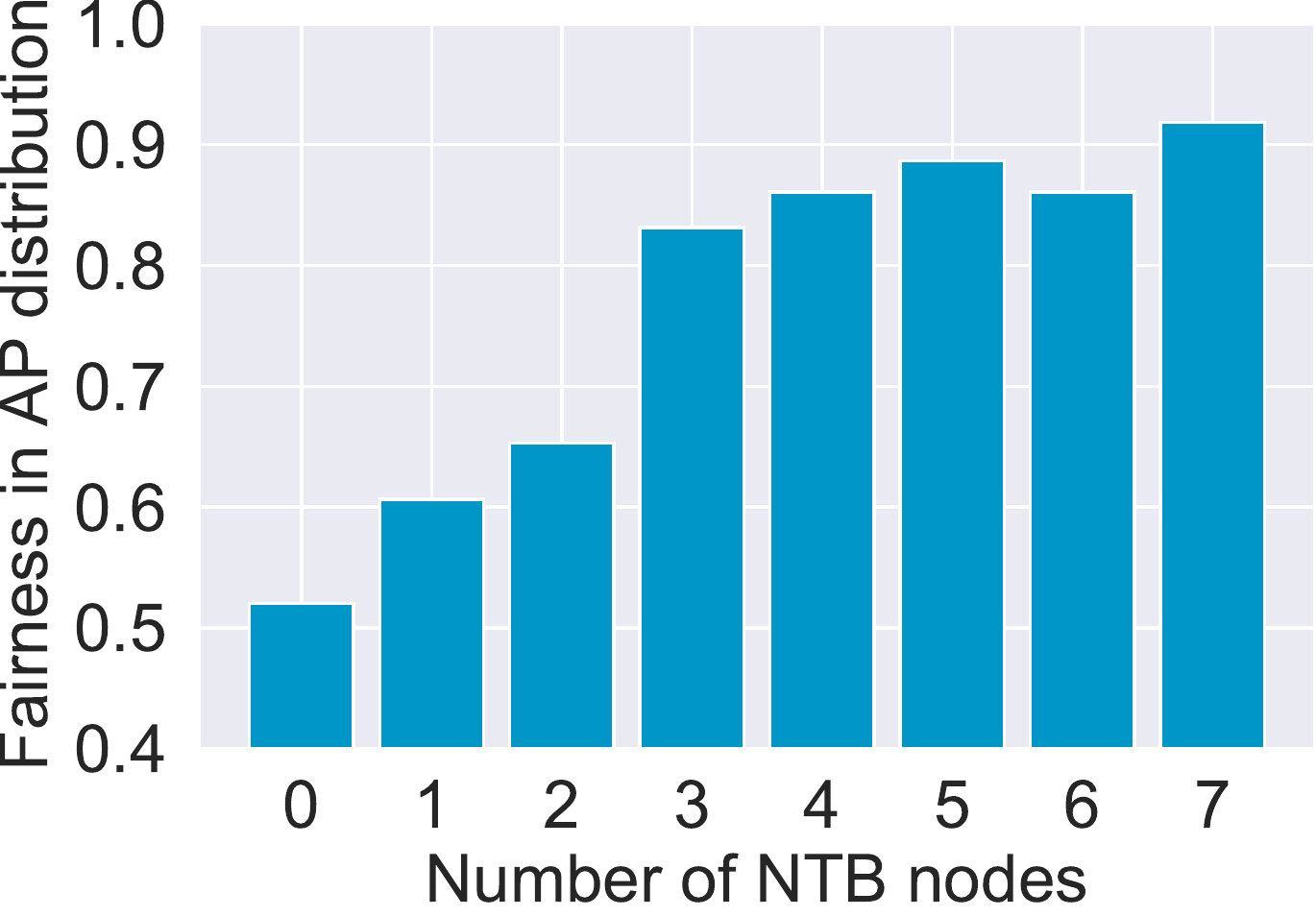}\label{fig:fair}} \hspace{2pt}
 \subfloat[Reduction in backhaul length.
 ]{\includegraphics[width=0.47\linewidth]{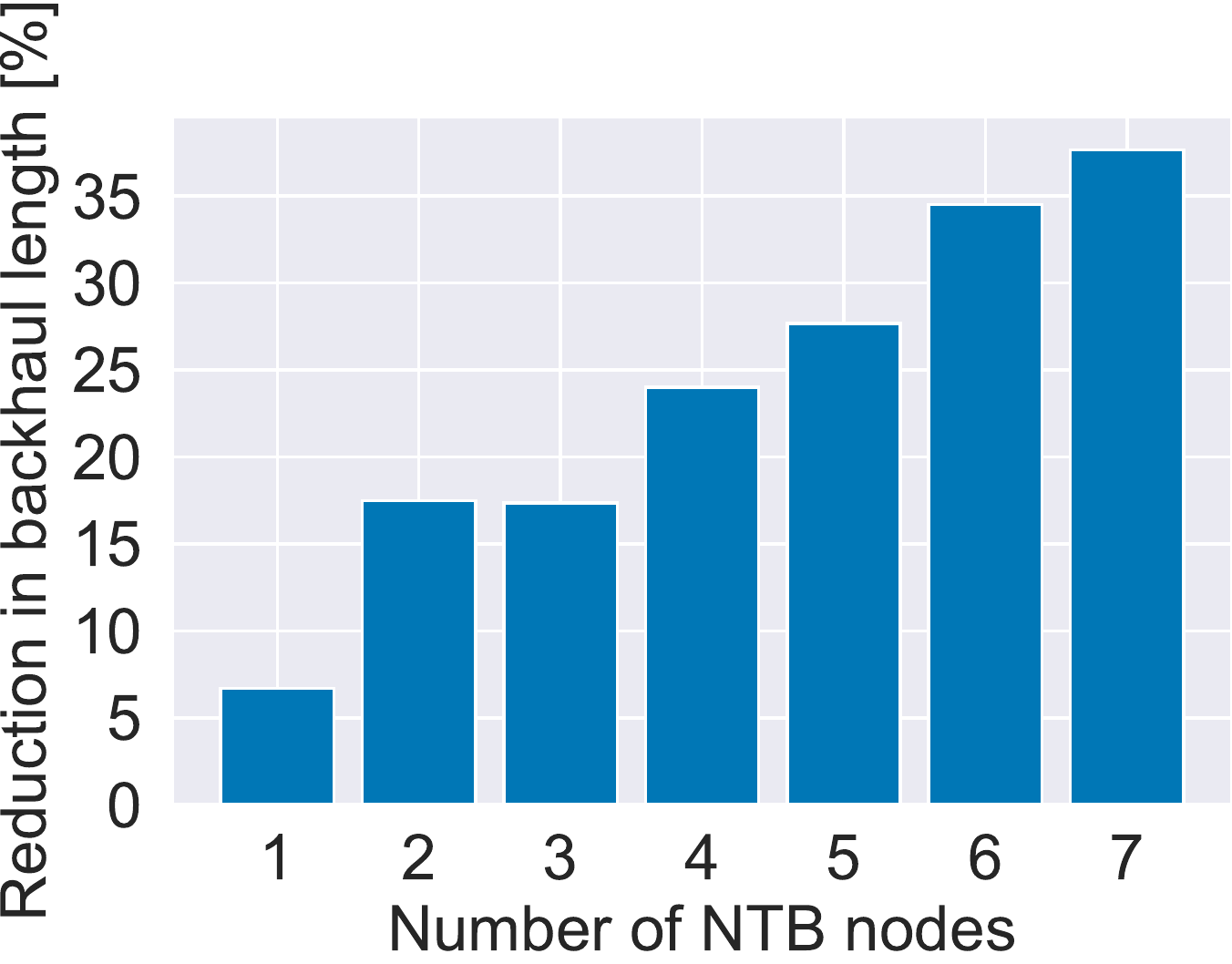}\label{fig:length}}
 \caption{Variation in key network parameters as the number of non-terrestrial backhaul nodes increases.}%
 \label{fig:netx}%
\end{figure}

\section{Conclusion} \label{sec:conclusion}
In this work, we have proposed an algorithmic pipeline to deploy a communication network to connect the unconnected population in remote/rural areas. We made use of the high resolution population data to first plan the AP deployment, and then connected them to the BNs in a cost-effective manner. To support the existing terrestrial BNs, we suggest the deployment of non-terrestrial BNs to further improve the network performance in terms of average hop count, traffic distribution, and backhaul length.
The next task is to choose the best backhaul type \cite{su2013comparative} based on the distance and geographical conditions. The number of backhaul and fronthaul nodes are constrained by the budget, but they also need to be sufficiently high to meet the traffic demands of the users. Solving this optimization problem relies on the subjective costs such as per capita GDP, and operational and capital expenses related to the infrastructure \cite{wu2007heterogeneous, chen2015financial, smail2017techno}.

\section*{Acknowledgement}
The work is partially supported by the Klaus Tschira Foundation through Alumnode Project Funding 2021-2022.

\appendix

\subsection{Covering and Packing} \label{app:covering_packing}
We define covering and packing in a two-dimensional space. See~\cite[Sec.~4.2]{vershynin2018high} for a reference.
\begin{definition}[Covering] \label{def:covering}
    An $\epsilon$-cover of a set $\mathcal{T}$ in $\mathbb{R}^2$ is a set $\{t_1,\dots,t_N\} \subset \mathcal{T}$ such that for all $t \in \mathcal{T}$ there exists an $i\in [\![ 1,N] \!]$ such that $\|t_i - t\| \le \epsilon$. The $\epsilon$-covering number $N(\epsilon,\mathcal{T})$ is the cardinality of the smallest $\epsilon$-cover.
\end{definition}

\begin{definition}[Packing] \label{def:packing}
    An $\epsilon$-packing of a set in $\mathbb{R}^2$ is a set $\{t_1,\dots,t_P\} \subset \mathcal{T}$ such that $\|t_i-t_j\| > \epsilon$ for all $i,j \in [\![ 1,P]\!]$. The $\epsilon$-packing number $P(\epsilon,\mathcal{T})$ is the cardinality of the largest $\epsilon$-packing.
\end{definition}

\begin{proposition} \label{prop:covering_number}
    It holds that
    \begin{align} \label{eq:covering_bound_1}
        P(2\epsilon,\mathcal{T}) \le N(\epsilon,\mathcal{T}) \le P(\epsilon,\mathcal{T}),
    \end{align}
    and 
    \begin{align} \label{eq:covering_bound_2}
        \frac{{\rm Area}(\mathcal{T})}{\pi \epsilon^2} \le N(\epsilon,\mathcal{T}) \le \frac{4{\rm Area}(\mathcal{T}_{\epsilon/2})}{\pi \epsilon^2},
    \end{align}
    where $\mathcal{T}_{\epsilon/2}$ denotes the union of the circles of radius $\epsilon/2$, each centered at a point in $\mathcal{T}$. ($\mathcal{T}_{\epsilon/2}$ is an inflated set of $\mathcal{T}$.)
\end{proposition}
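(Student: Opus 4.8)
The plan is to prove the two chains of inequalities in turn: first the combinatorial comparison \eqref{eq:covering_bound_1}, and then the area bounds \eqref{eq:covering_bound_2}, using \eqref{eq:covering_bound_1} as an intermediate step for the upper bound in \eqref{eq:covering_bound_2}.

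For the right-hand inequality in \eqref{eq:covering_bound_1}, I would pick a \emph{maximal} $\epsilon$-packing $\{t_1,\dots,t_P\}$ of $\mathcal{T}$, i.e. one of largest cardinality $P=P(\epsilon,\mathcal{T})$. Maximality forces every $t\in\mathcal{T}$ to satisfy $\|t-t_i\|\le\epsilon$ for some $i$; otherwise $\{t_1,\dots,t_P,t\}$ would be a strictly larger $\epsilon$-packing. Hence the maximal packing is itself an $\epsilon$-cover, so $N(\epsilon,\mathcal{T})\le P(\epsilon,\mathcal{T})$. For the left-hand inequality I would fix a smallest $\epsilon$-cover $\{c_1,\dots,c_N\}$ with $N=N(\epsilon,\mathcal{T})$ and a largest $2\epsilon$-packing $\{s_1,\dots,s_P\}$ with $P=P(2\epsilon,\mathcal{T})$, and assign to each $s_j$ an index $i(j)$ with $\|s_j-c_{i(j)}\|\le\epsilon$. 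If $i(j)=i(k)$ for $j\neq k$, the triangle inequality gives $\|s_j-s_k\|\le 2\epsilon$, contradicting the defining property $\|s_j-s_k\|>2\epsilon$ of the packing; so $j\mapsto i(j)$ is injective and $P(2\epsilon,\mathcal{T})\le N(\epsilon,\mathcal{T})$.

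For the lower bound in \eqref{eq:covering_bound_2}, a smallest $\epsilon$-cover $\{c_1,\dots,c_N\}$ satisfies $\mathcal{T}\subseteq\bigcup_{i=1}^N\mathcal{B}(c_i,\epsilon)$, so monotonicity and subadditivity of area give ${\rm Area}(\mathcal{T})\le N\pi\epsilon^2$, i.e. $N(\epsilon,\mathcal{T})\ge{\rm Area}(\mathcal{T})/(\pi\epsilon^2)$. For the upper bound I would take a maximal $\epsilon$-packing $\{t_1,\dots,t_P\}$ again: since the centers are pairwise at distance $>\epsilon$, the disks $\mathcal{B}(t_i,\epsilon/2)$ are pairwise disjoint, and each lies in $\mathcal{T}_{\epsilon/2}$ because $t_i\in\mathcal{T}$. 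Additivity of area over this disjoint union yields $P\pi(\epsilon/2)^2\le{\rm Area}(\mathcal{T}_{\epsilon/2})$, hence $P(\epsilon,\mathcal{T})\le 4{\rm Area}(\mathcal{T}_{\epsilon/2})/(\pi\epsilon^2)$; combining with $N(\epsilon,\mathcal{T})\le P(\epsilon,\mathcal{T})$ from \eqref{eq:covering_bound_1} finishes the proof.

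There is no deep obstacle here; this is the standard greedy/volumetric argument. The only points that need care are the bookkeeping of strict versus non-strict inequalities in the definitions of cover and packing (so that the triangle-inequality contradiction and the disjointness of the radius-$\epsilon/2$ disks are both genuine), and the degenerate cases of an infinite packing number or of $\mathcal{T}$ having zero or infinite area, which for the paper's use case ($\mathcal{T}=\mathcal{U}$ finite) are immediate.
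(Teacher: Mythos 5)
Your proposal is correct, but it takes a different route from the paper: the paper does not prove these bounds at all, it simply cites them as Lemma~4.2.8 and Proposition~4.2.12 of~\cite{vershynin2018high}, whereas you reconstruct the standard arguments from scratch. Your reconstruction matches the classical proofs faithfully: a maximum-cardinality $\epsilon$-packing must be an $\epsilon$-cover (otherwise it could be extended), giving $N(\epsilon,\mathcal{T})\le P(\epsilon,\mathcal{T})$; the triangle-inequality injection of a $2\epsilon$-packing into a smallest $\epsilon$-cover gives the left inequality of \eqref{eq:covering_bound_1}, and here your attention to the strict inequality in Definition~\ref{def:packing} is exactly what makes the contradiction and the disjointness of the radius-$\epsilon/2$ disks genuine; the two volumetric counts then yield \eqref{eq:covering_bound_2}, with the upper bound obtained by routing through $N(\epsilon,\mathcal{T})\le P(\epsilon,\mathcal{T})$ just as in the reference. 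What your version buys is a self-contained, elementary proof that does not depend on the cited text and makes the strict/non-strict bookkeeping explicit; what the paper's citation buys is brevity. One caveat worth stating if you write this out in full: in your lower bound the quantity ${\rm Area}(\mathcal{T})$ must be read as the Lebesgue measure of $\mathcal{T}$ (as in~\cite{vershynin2018high}), since the cover only covers $\mathcal{T}$ itself; the paper's Notation paragraph instead declares ${\rm Area}(\cdot)$ to be the area of the convex hull, and with that reading the inequality ${\rm Area}(\mathcal{T})\le N\pi\epsilon^2$ does not follow from your covering argument (three mutually distant points have a tiny cover but a large convex hull). That mismatch lies in the paper's statement rather than in your reasoning, and it does not affect the upper bound, which only becomes weaker under the convex-hull reading.
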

\begin{proof}
    The bounds~\eqref{eq:covering_bound_1} and~\eqref{eq:covering_bound_2} follows Lemma~4.2.8 and Proposition~4.2.12, respectively, in~\cite{vershynin2018high}. 
\end{proof}

\bibliographystyle{IEEEtran}
\bibliography{main.bib}

\end{document}